\documentclass[
    11pt,
    letterpaper,
    reprint,
    notitlepage,
    superscriptaddress,
    aps,prx
]{revtex4-2}

\usepackage{amsmath,amssymb,amsfonts} 

\usepackage{physics}

\usepackage{subdepth} 	

\usepackage{bm}	
\renewcommand{\mathbf}{\bm}
\usepackage{dsfont}	
\renewcommand{\mathbb}{\mathds}	

\usepackage[svgnames,dvipsnames]{xcolor}
\usepackage{graphicx}
\definecolor{NewBlue}{rgb}{0.1, 0.1, 0.7}
\definecolor{NewRed}{rgb}{0.7, 0.1, 0.1}
\usepackage[colorlinks,
    linkcolor=Maroon,
    citecolor=NewBlue,
    urlcolor=NewBlue]{hyperref}

\usepackage[capitalise]{cleveref}
\usepackage{leftindex}

\usepackage[normalem]{ulem}

\usepackage{amsthm} 

\usepackage{tikz}
\usetikzlibrary{arrows.meta, positioning}

\usepackage{pdfpages}
\makeatletter
\AtBeginDocument{\let\LS@rot\@undefined}
\makeatother

\newtheorem{lemma}{Lemma}
\newtheorem{theorem}{Theorem}

\renewcommand{\t}[1]{\mathrm{#1}}

\renewcommand{\rho}{\varrho}
\newcommand{\p}{\partial}

\newcommand{\K}{\mathcal{K}}
\newcommand{\J}{\mathcal{J}}
\newcommand{\EE}{\mathbb{E}}

\begin{document}

\title{Completely-positive non-signalling non-Markovian dynamics}

\author{Serhii Kryhin}
\email{skryhin@g.harvard.edu}
\affiliation{Department of Physics, Harvard University, Cambridge 02139 MA, USA}

\author{Vivishek Sudhir}
\affiliation{LIGO Laboratory \& Department of Mechanical Engineering, Massachusetts Institute of Technology, 
Cambridge 02139 MA, USA}

\date{\today}

\begin{abstract}
We define non-Markovian quantum dynamics as evolution in which the current state depends on 
all past states, and completely characterize its structure under the assumptions of 
complete positivity and non-signalling.
The resulting continuous-time dynamics is an integro-differential equation that augments the 
Gorini-Kossakowski-Sudarshan-Lindblad equation with a memory integral, 
and is capable of describing the quantum state of systems exposed to noise with 
any integrable power spectral density with no further approximations.
We then establish a formalism to evaluate multi-time correlations of 
measurement outcomes in this general setting, obviating the need for a regression theorem.
As an application, we derive the emission spectrum of a driven two-level system 
coupled to a non-Markovian bath: the familiar Mollow triplet 
acquires a frequency-dependent linewidth that encodes the memory of the bath.
Our work provides a rigorous yet transparent description of the quantum state 
of non-Markovian systems, 
opening the door for state estimation and state-based quantum control 
beyond the Markovian regime.
\end{abstract}

\maketitle


\emph{Introduction.}
Despite the alluring simplicity of Markovian models of physical processes,
non-Markovian processes are ubiquitous in both classical and quantum 
physics \cite{Kubo,Kam23,BP02}.
Notorious examples include creep, \cite{Tsch89}, crackle \cite{Seth01}, and glassy-ness \cite{Phill96}; anomalous diffusion \cite{BouGeo90}; and
``$1/f$'' noise in a wide range of systems, including classical \cite{Calo74,Ziel79} and 
quantum electronics \cite{PalAlt14,BrowBla15,MullLise19,RowOliv23,YonTar23}, 
mechanical oscillators used for precision sensing \cite{Attkinson1963,Saul90,Kuro99,GrobAsp15,FedKipp18,CripCorbitt19}, 
and a bewildering variety of natural phenomena \cite{Mand68,West90}.

Considerable effort has been put into understanding non-Markovian dynamics in the quantum setting (see Refs. \cite{RivPlen14,BreuVacc16,RevModPhys.89.015001,LiWise18,Chru22} for reviews). 
Techniques employed to describe non-Markovian quantum dynamics, when they are 
model-agnostic \cite{Nakajima,Zwanzig,Mori65,ShibHash79,Grahm89,Breuer2007,ChruKoss10,GioPalm12,
Polloc2018,White2025}, often invoke ad-hoc or uncontrolled approximations, or are model-specific in 
some way \cite{HuPazZha92,Bud04,Shabani2005,BreuVac08,Ciccarello2013,Vacchini2013,Vacch16,Lorenzo2016},
or, in the case of systems afflicted by ``$1/f$'' noise, phenomenological to the extent that it 
is unclear how to describe the quantum state of the system \cite{PalAlt14,BrowBla15,Saul90}.
That is, a precise and general characterization of non-Markovian quantum processes, as done
for Markovian processes by Gorini-Kossakowski-Sudarshan-Lindblad (GKSL) \cite{GKS76,Lind76}, 
is lacking.

We define and characterize a class of non-Markovian quantum processes that are 
completely-positive and non-signaling.
The non-signaling condition crucially constrains the structure of the non-Markovian contributions
with no further assumptions.
The continuous-time dynamics that we derive is a non-Markovian generalization of 
the GKSL equation,
and describes the state evolution of quantum systems driven by 
noise of any integrable power spectral density.
We show how statistics of measurement outcomes can be derived from this formalism even though
a regression theorem is unavailable.
Finally, we illustrate the formalism on a non-Markovian generalization 
of a driven two-level system \cite{Mollow1969}.

\emph{A class of non-Markovian processes.}
In classical probability theory, a non-Markovian process is, informally, a random process whose
current value depends on all of its past values. A classical dynamical system is said to be
non-Markovian if its state is a non-Markovian process, i.e. if its current state depends on all of its
past states. It is this notion of non-Markovianity that we wish to extend to the quantum setting.

To wit, we consider a quantum system whose density matrix $\hat \rho_n$, at each time $t_n$, fully determines all of its observables at that time. Given a sequence of states $\hat{\rho}_0,\ldots, 
\hat{\rho}_n$, we say that the system is non-Markovian if the state $\hat{\rho}_{n+1}$ depends on all
previous states, i.e. there exists a non-trivial $(n+1)-$variable map $\K_{n+1}$ such that 
\begin{equation}\label{eq:Kn}
	\hat \rho_{n+1} = \K_{n+1}[\hat \rho_n, .., \hat \rho_0].
\end{equation}
This approach presumes no specific form of the state-history and 
is distinct from approaches where non-Markovian dynamics is defined in terms of the history 
of control operations \cite{MilMod21} (rather than the history of states).
Our concern is to characterize the structure of the maps $\{\K_n\}$, 
in the continuous-time limit, under a minimal set of physically reasonable axioms.

\begin{figure}[b]
\centering
\resizebox{\linewidth}{!}{%
\begin{tikzpicture}[
    box/.style={draw, minimum width=1.4cm, minimum height=1.2cm, thick, font=\large},
    lbl/.style={font=\normalsize},
    arr/.style={-{Latex[length=2.5mm]}, thick},
]

\def\yw{0.35}  

\foreach \i in {0,1,2} {
    \pgfmathsetmacro{\xpos}{\i * 2.5}
    \node[box] (I\i) at (\xpos, 0) {$\mathcal{E}_{\i}$};
}


\draw[thick] (-1.8, \yw) -- (I0.west |- 0,\yw);
\draw[arr] (-1.5, \yw) -- (-1.05, \yw);
\node[lbl, above] at (-1.25, \yw) {$\hat\sigma_0$};

\draw[thick] (I0.east |- 0,\yw) -- (I1.west |- 0,\yw);
\draw[arr] (1.0, \yw) -- (1.45, \yw);
\node[lbl, above] at (1.25, \yw) {$\hat\sigma_1$};

\draw[thick] (I1.east |- 0,\yw) -- (I2.west |- 0,\yw);
\draw[arr] (3.5, \yw) -- (3.95, \yw);
\node[lbl, above] at (3.75, \yw) {$\hat\sigma_2$};

\draw[thick] (I2.east |- 0,\yw) -- (6.8, \yw);
\draw[arr] (6.0, \yw) -- (6.45, \yw);
\node[lbl, above] at (6.25, \yw) {$\hat\sigma_3$};


\draw[thick] (-1.8, -\yw) -- (I0.west |- 0,-\yw);
\draw[arr] (-1.5, -\yw) -- (-1.05, -\yw);
\node[lbl, below] at (-1.25, -\yw) {$\hat\varrho_0$};

\draw[thick] (I0.east |- 0,-\yw) -- (I1.west |- 0,-\yw);
\draw[arr] (1.0, -\yw) -- (1.45, -\yw);
\node[lbl, below] at (1.25, -\yw) {$\hat\varrho_1$};

\draw[thick] (I1.east |- 0,-\yw) -- (I2.west |- 0,-\yw);
\draw[arr] (3.5, -\yw) -- (3.95, -\yw);
\node[lbl, below] at (3.75, -\yw) {$\hat\varrho_2$};

\draw[thick] (I2.east |- 0,-\yw) -- (6.8, -\yw);
\draw[arr] (6.0, -\yw) -- (6.45, -\yw);
\node[lbl, below] at (6.25, -\yw) {$\hat\varrho_3$};

\node[font=\Huge] at (7.2, 0) {$\cdots$};

\end{tikzpicture}%
}
\caption{\label{fig:chain} A realization of \cref{eq:Kn} wherein a
``system" and a ``bath" evolve jointly under a Markovian map with negligible entanglement.}
\end{figure}

But first, we clarify how such maps can be realized in a model of system-bath interactions.
Consider the discrete-time process sketched in \cref{fig:chain}:
at each time $t_n$, the full description of the process is given by a density matrix 
$\hat \rho_n \otimes \hat \sigma_n$, where $\hat \rho_n$ is the state of a ``system" (A) and $\hat \sigma_n$ is the state of a ``bath" (B).
Assume that the system and bath evolve under a joint Markovian evolution that results in 
negligible entanglement at each step. 
Then, at time $t_{n+1}$ the state of the full system is
\begin{equation}\label{eq:JointMarkEvo}
	\hat \rho_{n+1} \otimes \hat \sigma_{n+1} = \mathcal{E}_n[\hat \rho_n \otimes \hat \sigma_n],
\end{equation}
where $\mathcal{E}_n$ is a (single-variable) completely positive trace-preserving map.
The reduced states of the system and bath are respectively,
$\hat \rho_{n+1} = \Tr_B \mathcal{E}_n[\hat \rho_n \otimes \hat \sigma_n]$ and $\hat \sigma_{n+1} = \Tr_A \mathcal{E}_n[\hat \rho_n \otimes \hat \sigma_n]$.
Recursively eliminating the bath state $\hat\sigma_n$ between these pair of equations leads
to an expression for $\hat \rho_{n+1}$ in terms of all previous system states $\hat \rho_n, \ldots, \hat \rho_0$, i.e. a map of the form in \cref{eq:Kn}.
For \cref{eq:JointMarkEvo} to hold, each $\mathcal{E}_n$ must be an entanglement-breaking 
channel, implying that the bath effectively carries classical information \cite{HorShoRus03}.
In the sequel we directly work with maps of the form in \cref{eq:Kn}, without reference to 
any realization.


\emph{Axioms for the maps $\mathcal{K}_n$.}
We are interested in the maps $\{\K_n\}$ in \cref{eq:Kn}, each a multi-variable map 
acting on $n$ states, that satisfy some natural axioms. 

A natural demand is that the maps $\K_n$ are compatible with the system being part of a larger Hilbert space, which is axiomatized by the notion of complete positivity. 
In the familiar case of a single-variable map $\K_1$, it is completely positive (CP) 
if $\K_1 \otimes \mathcal{I}_m$ 
is positive for all $m$, where $\mathcal{I}_m$ is the identity map on an $m$-dimensional Hilbert space. 
The extension to multi-variable maps is similar \cite{AndChoi86}: 
a map of $n+1$ variables $\K_{n+1}[\hat \rho_n, \ldots, \hat \rho_0]$ is 
CP iff., for any set of $n+1$ positive block-matrices $\hat\rho_m$ (with $(i,j)$'th block 
$\hat \rho_{m,ij}$, $0 \leq m \leq n$), the block-matrix with $(i,j)$'th block $\K_{n+1}[\hat \rho_{n,ij}, \ldots, \hat \rho_{0,ij}]$ is also positive.
It captures the essential requirement that the maps $\K_n$ produce positive states even when
each of their arguments is possibly entangled with a spectator ancilla.

Ando and Choi \cite{AndChoi86} have shown that any multi-variable CP map
can be represented as a sum of multi-variable CP maps that are homogeneous in each of their arguments.
(or anti-homogeneous, which we ignore in the context of maps between quantum states).
That is, if $\K$ is a multi-variable CP map, then there exists a decomposition
\begin{equation}\label{eq:CPdecomposition}
	\K[\hat \rho_n,\ldots, \hat \rho_0] = \sum_{m_1, \ldots, m_n} \K_{m_1, \ldots, m_n}[\hat \rho_n,\ldots, \hat \rho_0],
\end{equation}
where $m_i$ are positive integers and $\K_{m_1, \ldots, m_n}$ is homogeneous
of degree $m_i$ in the $i$'th argument, i.e. 
\begin{equation}\label{eq:CPhomogeneity}
\begin{split}
	&\K_{m_1, \ldots, m_n}[\hat \rho_n,\ldots, z \hat \rho_i, \ldots, \hat \rho_0]\\ 
	&\qquad\qquad = z^{m_i} \K_{m_1, \ldots, m_n}[\hat \rho_n,\ldots, \hat \rho_i, \ldots, \hat \rho_0].
\end{split}
\end{equation}
for any $z \in \mathbb{C}$. 
This is the structural implication of complete positivity.

In addition to complete-positivity, we demand that the maps $\K_n$ are non-signaling (NS).
The single-variable map $\K_1$ is NS \cite{Ghir88,Gisin89,SimGis01} iff.
it is convex-linear, i.e. 
$\mathcal{K}_1[\sum_\alpha p_\alpha \hat{\rho}_\alpha] = \sum_\alpha p_\alpha 
\mathcal{K}_1[\hat{\rho}_\alpha]$, for any states $\hat{\rho}_\alpha$ and any probabilities 
$\{p_\alpha\}$. 
Convex-linearity ensures that statistically equivalent ensembles cannot 
be distinguished by the map $\K_1$, so no subsequent local measurement 
can reveal which ensemble was remotely prepared; this is precisely non-signaling.
We can 
extend the notion of non-signaling to multi-variable maps by demanding that they are convex-linear
in each of their arguments, i.e.
\begin{equation}\label{eq:ConvexLinearity}
\begin{split}
	\K_{n+1}&\left[\sum_\alpha p_\alpha \hat{\rho}_{n,\alpha}, \cdots, 
	\sum_\alpha p_\alpha \hat{\rho}_{0,\alpha}\right]\\ 
	&\qquad \qquad\qquad= \sum_\alpha p_\alpha \K_{n+1}[\hat{\rho}_{n,\alpha}, \ldots, \hat{\rho}_{0,\alpha}],
\end{split}
\end{equation}
for any states $\hat{\rho}_{m,\alpha}$ and any probabilities $\{p_\alpha\}$. 

Convex-linearity can be conveniently reformulated in terms of sub-normalized states.
Note that the single-variable map $\K_1$ is defined on the set of normalized states.
The normalized states form the base of a cone of sub-normalized
states $\{p \hat{\rho} : p \in [0,1], \Tr \hat{\rho}=1\}$.
Any convex-linear map on the base of a convex cone has a unique extension 
to the whole cone which is homogeneous of degree one (see End Matter).
That is, $\K_1$ has a unique extension to sub-normalized states: 
$\K_1[p\hat{\rho}] = p \K_1[\hat{\rho}]$ for any $p\in[0,1]$. 
It can be shown that convex-linear multi-variable maps have a similar extension, i.e.
\begin{equation}\label{eq:TraceCond}
	\K_{n+1}[p \hat \rho_n, .., p \hat \rho_0] = p \, \K_{n+1}[\hat \rho_n, .., \hat \rho_0],
\end{equation}
for any $p\in [0,1]$ and any normalized states $\{\hat\rho_n\}$ (see End Matter).
This is an equivalent formulation of non-signaling and has the interpretation that
if the state $\hat \rho_0$ is initialized with probability $p$, represented by the sub-normalized
state $p \hat \rho_0$, then the consequent evolution will happen with the same probability $p$.

\emph{Structure of CPNS maps.}
We now identify the most general form of completely-positive non-signaling (CPNS) maps,
i.e. those that satisfy \cref{eq:CPdecomposition,eq:CPhomogeneity,eq:TraceCond}.
Note that \cref{eq:TraceCond} constrains the values of $m_i$ that can appear in the expansion
in \cref{eq:CPdecomposition}:
only the terms with $\sum_i m_i = 1$ satisfy \cref{eq:TraceCond} for all $0 \leq p \leq 1$, so 
$m_{i} = \delta_{ik} $ for some $0 \leq k \leq n$; i.e. any multi-variable CP 
map $\K[\hat \rho_n, \ldots, \hat \rho_1]$ is a sum of one-argument maps $\K_{0,.., m_k = 1, ..,0}[\hat \rho_k]$. 
But since each of such CP maps is also linear, according to \cite{AndChoi86}, it has a Kraus representation. 
Thus, an arbitrary CPNS map $\K_{n+1}$ can be expressed in the form
\begin{equation}\label{eq:NonMarkKraus}
    \mathcal{K}_{n+1}[\hat \rho_n, .., \hat \rho_0] = 
	\sum_{m=0}^n \sum_\alpha V^\dagger_{\alpha n m} \hat \rho_m V_{\alpha n m},
\end{equation}
where the operators $V_{\alpha nm}$ satisfy
$\sum_\alpha V_{\alpha nm} V^\dagger_{\alpha nm} = p_{nm} \hat I$ with 
$p_{nm} \geq 0$ and $\sum_{m=0}^n p_{nm} = 1$, 
to ensure that $\mathcal{K}_{n+1}$ is trace-preserving.
This structure subsumes the Kraus representation of a single-variable ($n=0$) 
CP map.

\emph{Continuous-time limit.}
To elucidate the continuous-time dynamics corresponding to the discrete-time CPNS maps in 
\cref{eq:NonMarkKraus}, we suppose that the evolution is continuous in the sense that 
$\hat{\rho}_{n+1} =\hat{\rho}_n + O(\Delta t)$, 
where $\Delta t = t_{n+1} - t_n$, and take the limit $\Delta t \to 0$.

Continuity of the evolution constrains the scaling of the operators $V_{\alpha nm}$ 
with $\Delta t$: the required continuous behavior can only be produced if
\begin{equation}\label{eq:Vscaling}
\begin{split}
	V_{\alpha nm} = 
	\begin{dcases}
	  1 + \Delta t\, {L}_0(t_n) + O(\Delta t^2); & \alpha = 0, m = n\\
	  \sqrt{\Delta t}\,  L_\alpha(t_n) + O(\Delta t); & \alpha \neq 0, m = n\\
	  \Delta t\, {R}_\alpha(t_n, t_m) + O(\Delta t^2); & \alpha \neq 0, m < n,
    \end{dcases}
\end{split}
\end{equation}
where ${L}_\alpha(t), {R}_\alpha(t,t')$ are operators independent of $\Delta t$,
and $\sum_\alpha R_\alpha(t, t') R^\dagger_\alpha(t, t') = p(t, t') \hat I$ where $p(t, t') \geq 0$
is the rate density for invoking the state $\hat \rho(t'<t)$ in the evolution of $\hat \rho(t)$.

The terms in \cref{eq:Vscaling} represent qualitatively different contributions 
to $\hat{\rho}_{n+1}$.
The terms with $m=n$ represent the action of Kraus operators on the
current state $\hat{\rho}_n$, and are thus the usual Markovian contribution to the dynamics
exhibiting their well-known scaling with $\Delta t$ \cite{Manzano2020}.
The terms with $m < n$ represent the action of Kraus operators on all past states 
$\{\hat{\rho}_m\}_{m < n}$, and are thus the non-Markovian contribution. 
The scaling with $\Delta t$ here leads to the
Kraus sum scaling as $\sum_m^n (\Delta t)^2$, but since there are $n \sim 1/\Delta t$ terms in the sum, 
the total contribution of this non-Markovian part scales as
$\Delta t$, ensuring a well-defined continuous-time limit.   

Substituting the parametrization for $V_{\alpha nm}$ into \cref{eq:NonMarkKraus} and taking the limit $\Delta t \to 0$ leads to an integro-differential equation for $\hat\rho(t)$:
\begin{multline}\label{eq:EqwithL0}
	\frac{\p \hat \rho}{\p t} = 
	L_0^\dagger(t) \hat \rho(t) + \hat \rho(t) L_0(t) 
	+ \sum_{\alpha \neq 0} L_\alpha^\dagger (t) \hat \rho(t) L_\alpha(t)\\
	+ \int_{t_0}^t \dd{t'} \sum_\alpha R_\alpha^\dagger (t, t') \hat \rho(t') R_\alpha (t, t').
\end{multline}
Splitting $L_0(t) = G(t) + iH(t)$ into Hermitian $G(t)$ and $H(t)$ and imposing $\Tr[\p_t \hat \rho(t)] = 0$ fixes $G(t)$. 
After the explicit separation of the Markovian and non-Markovian contributions, the non-Markovian master equation attains its final form:
\begin{equation}\label{eq:MasterEquation}
	\frac{\p \hat \rho}{\p t} = (\mathcal{L}\hat \rho)(t) 
	+ \int_{t_0}^{t} \dd{t^\prime} (\mathcal{R}\hat \rho)(t, t^\prime).
\end{equation}
The generator for Markovian part of evolution has a form $(\mathcal{L}\hat \rho)(t) 
= - i [H(t), \hat \rho(t)] + \sum_{\alpha}\!\left( L^\dagger_\alpha \hat \rho L_\alpha - \tfrac{1}{2}\{ L_\alpha L_\alpha^\dagger; \hat \rho \}\right)$, which is the familiar GKSL generator. 
The operators $L_\alpha = L_\alpha(t)$ may depend explicitly on the current time. 
The non-Markovian part of the evolution is governed by
\begin{equation}\label{eq:RExpression}
	(\mathcal{R}\hat \rho)(t, t^\prime) = \sum_\alpha \left( R_\alpha^\dagger\hat\rho(t^\prime) R_\alpha - \frac{1}{2} \{ R_\alpha R_\alpha^\dagger; \hat \rho(t) \} \right),
\end{equation}
where $R_\alpha = R_\alpha(t, t')$ are allowed to depend on both current and retarded times.
The structure of \cref{eq:RExpression} looks like that of a GKSL generator, but the jump term
acts on the retarded density matrix $\hat \rho(t^\prime)$ and the anti-commutator 
on the current state $\hat \rho(t)$; 
this asymmetry is not incidental, but is a direct consequence of complete positivity 
and non-signalling. (See Sec. I of Supplemental Information \cite{Supplement} for details.)
This structure is subtly different from other proposals for non-Markovian 
dynamics \cite{Nakajima,Zwanzig,Shabani2005,Vacch16}, but it is this difference that guarantees 
that \cref{eq:MasterEquation} represents a CPNS evolution.

The construction above is valid whenever $R_\alpha(t,t')$ are bounded and $L_\alpha(t)$ 
are closed \cite{Lind76,Dav79,DunfordSchwartz}.
The converse is true in the sense that any evolution of the form in \cref{eq:MasterEquation}
can be expressed as \cref{eq:NonMarkKraus}, whenever $L_\alpha(t)$ and $R_\alpha(t,t')$ are
bounded and $p(t,t')$ is integrable on the interval $[0, t]$.
However, for unbounded $L_\alpha(t)$, there is no general result, similar in spirit 
to \cite{Dav79}, that would establish the converse.

\emph{Continuous-time dynamics of observables.}
The non-Markovian master equation 
\Cref{eq:MasterEquation} defines a linear map $\hat{\rho}(t_0) \mapsto \hat{\rho}(t) = 
\mathcal{K}_{t,t_0}\hat{\rho}(t_0)$ on the set of states, where $\mathcal{K}_{t,t_0}$ is the
continuous-time version of the map $\K_n$ in \cref{eq:Kn}.
The non-Markovian nature of the evolution implies that $\mathcal{K}_{t,t_0}$ does not satisfy the composition law, i.e.
$\mathcal{K}_{t_2,t_0} \neq \mathcal{K}_{t_2,t_1}\circ \mathcal{K}_{t_1,t_0}$ 
for $t_2 > t_1 > t_0$ in general.
This means that, although the adjoint map $\mathcal{K}_{t,t_0}^*$ formally evolves observables by $\hat{X}(t_0)\mapsto \hat X(t) = \mathcal{K}_{t,t_0}^* \hat X(t_0)$, 
the equations of motion for observables do not close, and there is no regression theorem for correlation functions of observables.
Nevertheless, correlation functions can be defined
and evaluated by specifying measurement interventions and propagating the
measurement-conditioned state in the intervening duration using \cref{eq:MasterEquation}.

\emph{Instantaneous measurements.}
We first consider the case of an instantaneous measurement defined by
the state transformation $\hat \rho(t) \mapsto 
(\mathcal J_\alpha \hat \rho)(t) = J_\alpha^\dagger \hat \rho(t) J_\alpha $, 
where $J_\alpha$ is an effect operator corresponding to the measurement outcome $\alpha$.
This measurements defines a classical random variable $I_\alpha(t) \in \{1,0\}$, 
depending on whether $\alpha$ is observed or not.
Its classical expectation is given by $\EE[I_\alpha(t)]= \Tr\!\left[(\J_\alpha\hat{\rho})(t)\right]$.
Its correlation function $\EE[I_\alpha(t+ \tau)I_\beta(t)]$ corresponds to the action of 
the corresponding operator at each of the two times. 
Since $I_\alpha(t)$ and $I_\beta(t)$ are valued in $\{1,0\}$, 
the probability $P[I_\beta(t) = 1] = \EE[I_\beta(t)]$.
Then
\begin{equation*}
	\EE[I_\alpha(t + \tau) I_\beta(t)] = \EE[I_\alpha(t+\tau) \mid I_\beta(t) = 1] 
	P[I_\beta(t) = 1],
\end{equation*}
where the conditional expectation value is given by
\begin{equation}\label{eq:CondExpect}
	\EE[I_\alpha(t+\tau) \mid I_\beta(t) = 1] = \Tr[\mathcal J_\alpha 
	\hat \rho(t+\tau\mid I_\beta(t) = 1)],
\end{equation}
with $\hat \rho(t+\tau| I_\beta(t) = 1)$ the state of the system at the time $t+\tau$ 
conditioned on $I_\beta(t) = 1$ at the time $t$.
The event $I_\beta(t) = 1$ implies that $\mathcal J_\beta$ is applied at time $t$ to 
the state $\hat \rho(t)$; thus the conditional state
\begin{equation}\label{eq:CondState}
	\hat \rho(t' | I_\beta(t) = 1) = 
\begin{cases}
	(\mathcal J_\beta \hat \rho)(t)/P[I_\beta(t) = 1], & t' = t,\\
	\hat \rho(t'), & t' < t.
\end{cases}
\end{equation}
By causality, measurements performed in the future cannot retroactively affect the current state 
of the system, so the state at $t' < t$ remains unchanged.
The conditional state $\hat \rho(t+\tau| I_\beta(t) = 1)$ is obtained by evolving the system in the state $\hat \rho(t| I_\beta(t) = 1)$ from time $t$ to $t+\tau$ by
\begin{equation}\label{eq:DiffCorrEq}
	\p_\tau \hat \rho_{t}(t+\tau) = (\mathcal{L}\hat\rho_{t})(t+\tau)
	 + \int_{t_0}^{t+\tau} \dd{t^\prime} (\mathcal{R}\hat \rho_{t})(t+\tau, t^\prime),
\end{equation}
where $\hat \rho_t(t') = \hat \rho(t'| I_\beta (t) = 1)$ and the initial conditions at time 
$t$ are given by \cref{eq:CondState}.
This state then allows to evaluate \cref{eq:CondExpect} and 
therefore $\EE[I_\beta(t+\tau)I_\alpha(t)]$.

A more general non-binary random variable can be thought of as a weighted sum of binary-valued
random variables $I(t) = \sum_\alpha \nu_\alpha I_\alpha(t)$, which corresponds to a measurement defined by the operator $\mathcal J = \sum_\alpha \nu_\alpha \mathcal J_\alpha$.
Clearly, its expectation value is $\EE[I(t)] = \Tr[(\mathcal J \hat \rho)(t)]$,
and the correlation function is given by
\begin{equation}\label{eq:ClassicSum}
	\EE[I(t+ \tau)I(t)] = \sum_{\alpha \beta} \nu_\alpha \nu_\beta \EE[I_\alpha(t+ \tau) I_\beta(t)].
\end{equation}
The bilinearity of \cref{eq:ClassicSum} in weights $\nu_\alpha$ and the linearity of \cref{eq:DiffCorrEq} allows to define a conditional state $\hat \rho_{J, t}(t')$  
to compute $\EE[I(t+\tau)I(t)]$ directly:
\begin{equation}\label{eq:EII}
	\EE[I(t+\tau)I(t)] = \Tr[(\mathcal J \hat \rho_{J, t})(t+\tau)], \quad \tau > 0.
\end{equation}
The state $\hat \rho_{J, t}(t+\tau)$ is a result of evolution according to \cref{eq:DiffCorrEq} from $t$ to $t+\tau$, where $\hat \rho_t(t+\tau) = \hat \rho_{J, t}(t+ \tau)$ and the initial conditions at time $t$ are given by 
\begin{equation}\label{eq:RhoJ}
	\hat \rho_{J, t}(t') = 
\begin{cases}
	(\mathcal J \hat \rho)(t), &   t' = t,
	\\
	\EE[I(t)] \hat \rho(t'), & t' < t.
\end{cases}
\end{equation}

A particularly useful special case is a weak measurement of an observable $\hat{X}$.
It is obtained by specializing to 
${(\J\hat\rho)(t) = (x + \hat{X})^\dagger\hat\rho(t)(x+\hat{X})}$, which, to
leading order in $x$ is,
${\Tr[(\J\hat \rho)(t)] \approx x^2 + 2 x \Re\!\t{Tr}[\hat{X} \hat \rho(t)]}$.
We interpret this as defining a current of large mean value $x^2$, with fluctuations 
described by $\hat{X}$.
Subtracting the mean and rescaling defines the diffusive current
${I_X(t) \equiv x^{-1}(I(t)-x^2)}$.
The two-time correlation of this current is given, to leading order in $x$, by
\begin{equation}\label{eq:IXCorr}
	\EE[I_X(t+\tau)I_X(t)] = \Tr[(\mathcal X \hat\rho_{X,t})(t+\tau)], \quad \tau > 0,
\end{equation}
where we define $( \mathcal X \hat\rho)(t) \equiv \hat{X}^\dagger \hat\rho(t) + \hat\rho(t) \hat{X}$.
The quantity $\hat\rho_{X,t}(t+ \tau)$ is defined by evolving $\hat\rho_{X,t}(t)$ from time $t$ to $t+ \tau$ according to \cref{eq:DiffCorrEq}, where $\hat \rho_{t}(t+ \tau) = \hat \rho_{X, t}(t+ \tau)$, and the initial conditions are given by 
\begin{equation}\label{eq:RhoX}
	\hat \rho_{X, t}(t') = 
\begin{cases}
	(\mathcal X \hat \rho)(t), &   t' = t,
	\\
	2\Re\t{Tr}[\hat{X} \hat \rho(t)] \hat \rho(t'), & t' < t.
\end{cases}
\end{equation}

\emph{Continuous-time measurement.}
So far, the measurement intervention is assumed to be instantaneous. 
However, the scope of \cref{eq:EII} and \cref{eq:DiffCorrEq} can be expanded to 
time-continuous measurements using the method described in Ref. \cite{Landi2024}.
A time-continuous measurement can be modeled by incorporating it into the continuous dynamics of the
system: the measurement at each time is described by $(\mathcal J_\alpha^{(c)} \hat \rho)(t) \dd{t} 
= L_\alpha^\dagger \hat \rho L_\alpha (t) \dd{t}$, where $L_\alpha$ are some Lindblad operators 
that appear in \cref{eq:MasterEquation}. 
The back-action of the measurement is automatically included in the dynamics.
As before, a weighted current $I^{(c)}(t) = \sum_\alpha \nu_\alpha I_\alpha^{(c)}(t)$ can be defined, 
corresponding to the operation $\mathcal J = \sum_\alpha \nu_\alpha \mathcal J_\alpha$. 
It has the expectation value $\EE[I^{(c)}(t)] = \t{Tr}[(\mathcal J^{(c)} \hat \rho)(t)] $. 
Its correlation function can be expressed as
\begin{equation}\label{eq:ContMeasurement}
	\EE[I^{(c)}(t+\tau)I^{(c)}(t)] = \Tr[(\mathcal J \hat \rho_{J, t}^{(c)})(t+\tau)] + K \delta(t),
\end{equation}
where $K = \sum_\alpha \t{Tr}[\nu_\alpha^2 L_\alpha^\dagger \hat \rho L_\alpha]$.
The additional contribution $K \delta(t)$ is the $O(dt^{-1})$ contribution that arises 
from $\tau = 0$, where two measurements overlap.
The conditional state $\hat \rho_{J, t}^{(c)}(t+ \tau)$ is the state evolved from $t$ to $t+ \tau$ by \cref{eq:DiffCorrEq} with the same initial conditions at time $t$ as in \cref{eq:RhoJ}, but with $\mathcal J^{(c)}$ and $I^{(c)}$ being substituted for $\mathcal J$ and $I$ respectively.
Similarly, continuous-time weak measurement of the observable $\hat X$ can be described by the corresponding current with boundary conditions identical to \cref{eq:IXCorr,eq:RhoX}.

\emph{Application to a two-level system.}
We illustrate the above formalism on a driven two-level system (TLS) coupled to a bosonic bath. 
We model it by the Hamiltonian $H(t) = \omega_0 \sigma^\dagger \sigma 
+ \Omega \sigma e^{i \nu t} + \mathrm{h.c.}$, where $\sigma$ is the annihilation operator of 
the TLS, $\omega_0$ is the splitting between its two levels, and $\Omega e^{i \nu t}$ is a time-dependent drive.
Its interaction with a bosonic bath is described by the interaction Hamiltonian 
$H_I = g(\sigma^\dagger E + \sigma E^\dagger)$, where $E = \int_0^\infty \dd{\omega}
f_E(\omega)b(\omega)$ with $b(\omega)$ the annihilation operator of the bath mode 
distributed according to $f_E$, and $g$ is the coupling strength.
We assume that the bath is stationary, i.e. $\Tr_\t{B} [b^\dagger(\omega) b(\omega^\prime) \hat \rho_\mathrm{B}] = n(\omega) \delta(\omega - \omega^\prime)$, where $n(\omega)$ is the average 
occupation number of the bath.
A standard textbook exercise in the weak-coupling limit \cite{BP02} of applying the Born approximation, but crucially, not the Markov approximation, leads directly 
to \cref{eq:MasterEquation} where the non-Markovian contribution is given by 
\cref{eq:RExpression} with ${R_1(t, t') 
= S_E^{1/2}(t-t')\sigma^\dagger}$ and ${R_2(t,t') = S_E^{1/2}(t-t')\sigma}$,
where $S_E(t) = 2 g^2 \int_0^\infty d\omega\,|f_E(\omega)|^2 n(\omega)\cos((\omega-\omega_0)t)$
is the correlation function of the bath operator.
Bochner's theorem guarantees $S_E(t) \geq 0$ for all times, ensuring 
that $\sum_{i} R_i^\dagger(t, t') R_i(t, t') = S_E(t - t') \hat I$ is 
positive semi-definite, guaranteeing that the dynamics is CP.
The Markovian limit is recovered when $|f_E(\omega)|^2 n(\omega) = \mathrm{const}$, i.e. white, 
and $S_E(t) \propto \delta(t)$.
(See Sec. II of Supplemental Information \cite{Supplement} for details.)
We further include Markovian dissipation by adding Lindblad operators 
$L_1 = \sqrt{\Gamma_\mathrm{M}} \sigma^\dagger$ and $L_2 = \sqrt{\bar \Gamma_\mathrm{M}} \sigma$ 
to the master equation.
The resulting form of  \cref{eq:MasterEquation} is a Volterra integro-differential equation of 
the second kind, which can be solved (see Sec. III of Supplemental Information \cite{Supplement}).

We consider the continuous weak measurement of the operator $\sigma$, which defines the 
emission spectrum $S(\omega) = \int_0^\infty \cos(\omega \tau) \EE [ I_\sigma(t+ \tau) 
I_\sigma (t)]$.
Using the formalism developed above, and the solution of the master equation, we can evaluate
the emission spectrum.
In the limit of small detuning and large pumping rate, it takes the form
(see Sec. III of Supplemental Information \cite{Supplement} for details)
\begin{equation}\label{eq:NonMarkMollow}
    S(\omega + \omega_0) = \mathrm{Re}\!\left[\frac{1}{4\lambda_0} + \frac{1}{8\lambda_+} + \frac{1}{8\lambda_-}\right],
\end{equation}
where 
$\lambda_0 \approx -i\omega + \Gamma_\mathrm{T}$, 
and 
${\lambda_\pm \approx -i\omega \pm 2i\Omega  + 3\gamma^\prime_\mathrm{T}(\omega)/2}$, 
with $\gamma^\prime_\mathrm{T}(\omega) = (\Gamma_\mathrm{M} + \bar\Gamma_\mathrm{M} + \frac{4}{3} \Gamma_\mathrm{NM} +  \frac{2}{3}\gamma_\mathrm{NM}(\omega))/2$ and ${\Gamma_\mathrm{T} = \gamma^\prime_\mathrm{T}(0)}$. 
This is the non-Markovian analog of the Mollow triplet \cite{Mollow1969}: the three-peak 
structure of the spectrum is preserved, but each peak acquires a frequency-dependent 
linewidth governed by $\gamma_{\mathrm{NM}}(\omega) = \int_0^{\infty} S_E(t) e^{i \omega t} \dd{t}$, 
directly encoding the memory of the non-Markovian bath.
This is qualitatively similar to the phenomenological description of non-Markovian damping of a
harmonic oscillator by a frequency-dependent damping rate \cite{Saul90}, but our formalism
not only derives it from first principles, but also enables assigning a legitimate notion of a
quantum state to the system at all times, in contrast to the phenomenological approach.


\emph{Conclusion.}
We have defined a novel class of discrete-time non-Markovian quantum stochastic processes on 
the space of quantum states, and characterized its form under the assumption of complete positivity
and non-signalling.
Its continuous-time limit, when it exists, is given by a non-Markovian generalization of the
GKSL equation given by \cref{eq:MasterEquation}.
This framework is directly applicable to open quantum systems driven by noises with 
arbitrary integrable power spectral density, and allows a consistent definition of the 
quantum state of such a system at all times.
We have shown how this state determines the results of continuous measurements of the system, 
circumventing the failure of the regression theorem in the non-Markovian regime, and allowing all
operationally meaningful predictions to be extracted from the theory. 
All of this is illustrated on the example of a driven two-level system coupled to a bosonic
bath without the Markov approximation; the resulting non-Markovian analog of the Mollow triplet
encodes the memory of the bath in the frequency-dependent linewidth of each peak.

Open problems include the extension of the formalism to deal with noises with 
non-integrable power spectral density (such as $1/f$ noise), which will enable the consistent
assignment of a quantum state to such systems at all times; and a full quantum trajectory 
description of such a system, which will enable full state estimation and feedback control of 
non-Markovian systems.

\emph{Acknowledgment.} This research is funded
in part by an NSF CAREER award (PHY–2441238)
and by the Gordon and Betty Moore Foundation (grant
GBMF13780).


\bibliography{refs_nonMarkovian}

\clearpage
\appendix
\onecolumngrid

\begin{center}
{\bf\large End Matter}
\end{center}

\section*{Extension of an affine map on the base of a convex cone to the whole cone}

The requisite characterization, namely \cref{thm:ConeExtension} below, can be extracted from standard
results at the intersection of convex analysis and functional analysis \cite{AsiEll80}. 
Here we give a self-contained elementary proof.
The following definitions, required to state the result, are standard in convex analysis \cite{Rock70}.

A \emph{cone} $C$ is a subset of a real Banach space $V$ such that if $x \in C$, 
then $\alpha x \in C$ for all $\alpha > 0$.
The cone has a \emph{base} iff. there exists a continuous linear functional $f: V \to \mathbb{R}$ 
such that $f(x) > 0$ for all $x \in C$; then the base of the cone with respect to $f$ is the set
$B_f(C) = \{ x \in C \mid f(x) = 1 \}$, which is a subset of $C$.

A \emph{convex set} $S$ is a set
such that if $x, y \in S$, then $\lambda x + (1-\lambda) y \in S$ for all $\lambda \in [0,1]$. 

A \emph{convex cone} is a cone that is also a convex set. 
Let $C$ be a convex cone. Then for any $x,y \in C$ and $\alpha,\beta > 0$, we have $\alpha x + \beta y \in C$ since
$\alpha x + \beta y = (\alpha +\beta)[\tfrac{\alpha}{\alpha+\beta}x + \tfrac{\beta}{\alpha+\beta}y]$,
which is $(\alpha +\beta)$, a positive number, 
times a convex combination of $x$ and $y$.
Thus, a convex cone is closed under addition and positive scalar multiplication.

The base of a convex cone is a convex set. Suppose $b,b' \in B_f(C)$ and $\lambda \in [0,1]$. Then $\lambda b + (1-\lambda) b' \in C$ by the convexity of the cone; furthermore, since $f$ is linear, 
$f(\lambda b + (1-\lambda) b') = \lambda f(b) + (1-\lambda) f(b') = \lambda\cdot 1 + (1-\lambda)\cdot 1 = 1$; so $\lambda b + (1-\lambda) b' \in B_f(C)$.

\begin{lemma}\label{lem:UniqueDecomp}
	Every nonzero $x\in C$ can be uniquely expressed as $x = \alpha b$ for some $\alpha > 0$ and $b \in B_f(C)$.
\end{lemma}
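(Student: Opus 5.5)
The proof is a direct construction, with $\alpha$ forced by the functional $f$. Fix a nonzero $x \in C$. By the definition of a base, $f(x) > 0$, so I set $\alpha := f(x)$ and $b := x/\alpha$. Two things must be checked for existence. First, $b \in C$: this holds because $C$ is a cone and $1/\alpha > 0$. Second, $f(b) = 1$: by linearity of $f$, $f(b) = f(x)/f(x) = 1$. Hence $b \in B_f(C)$ and $x = \alpha b$ with $\alpha > 0$.

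For uniqueness, suppose $x = \alpha b = \alpha' b'$ with $\alpha, \alpha' > 0$ and $b, b' \in B_f(C)$. Apply $f$ to both expressions. Linearity together with $f(b) = f(b') = 1$ gives $\alpha = f(x) = \alpha'$. Dividing by the common positive scalar then gives $b = b'$. So the pair $(\alpha, b)$ is uniquely determined, namely $\alpha = f(x)$ and $b = x/f(x)$.

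The argument is elementary, and the only point needing care concerns the zero vector. The hypothesis $f(x) > 0$ for all $x \in C$ is what guarantees $\alpha \neq 0$, so that the division is legitimate. Read literally, that hypothesis implies $0 \notin C$. If one instead allows $0 \in C$ and asks only that $f(x) > 0$ for nonzero $x$, the restriction to nonzero $x$ in the statement handles this, and the same proof goes through unchanged. Convexity of $C$ is not needed for this lemma; only the cone property and the linearity of $f$ are used. Continuity of $f$ is likewise not needed here and will matter only in later steps.
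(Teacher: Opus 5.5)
Your proof is correct and uses the same construction as the paper: take $\alpha = f(x)$ and $b = x/f(x)$, then check $f(b)=1$. The paper's proof shows only existence, so you go slightly further by checking $b \in C$ via the cone property and by proving uniqueness, applying $f$ to $\alpha b = \alpha' b'$.
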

\begin{proof}
	Take $\alpha = f(x)$ and $b = x/f(x)$. 
	Since $f(b) = f(x/f(x)) = f(x)/f(x)=1$, we have $b \in B_f(C)$. 
\end{proof}

An \emph{affine} (or convex-linear) map $h$ between convex sets is a map such that for all $x,y$ in the domain and $\lambda \in [0,1]$, 
$h(\lambda x + (1-\lambda) y) = \lambda h(x) + (1-\lambda) h(y)$.

\begin{theorem}\label{thm:ConeExtension}
Let $C$ be a convex cone with a base $B_f(C)$, and let $h: B_f(C) \rightarrow B_f(C)$ be a bounded affine
map. Then there exists a unique bounded linear map $\tilde h$ on the cone such that 
(i) $\tilde h(\alpha x) = \alpha \tilde{h}(x)$ for all $x\in C$ and $\alpha > 0$; 
(ii) $\tilde h\vert_{B_f(C)} = h$; and 
(iii) $f(\tilde h(x)) = f(x)$ for all $x \in C$. 
\end{theorem}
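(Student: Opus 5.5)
The extension is forced by \cref{lem:UniqueDecomp}. For nonzero $x\in C$ I would define $\tilde h(x) = f(x)\, h\!\left(x/f(x)\right)$. Since $f(x)>0$ on $C$, this is well defined, and it is the only candidate: any map satisfying (i) and (ii) must obey $\tilde h(x) = \tilde h(f(x)\, b) = f(x)\, h(b)$ with $b = x/f(x)\in B_f(C)$. So uniqueness comes first and almost for free, and existence reduces to checking the listed properties for this formula. If $0\in C$ (it can only lie in the closure, since $f(0)=0$), set $\tilde h(0)=0$.

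The properties then follow in order.

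Property (ii): for $b\in B_f(C)$ we have $f(b)=1$, so $\tilde h(b)=h(b)$.

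Property (i): for $\alpha>0$, $f(\alpha x)=\alpha f(x)$, and the base point $\alpha x/f(\alpha x) = x/f(x)$ is unchanged. Hence $\tilde h(\alpha x)=\alpha \tilde h(x)$.

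Property (iii): because $h$ maps into $B_f(C)$, $f(h(b))=1$, so $f(\tilde h(x)) = f(x)\cdot 1 = f(x)$.

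The substantive step is additivity on the cone, $\tilde h(x+y)=\tilde h(x)+\tilde h(y)$, which is what "linear on the cone" means here. Write $x=\alpha b$ and $y=\beta b'$ with $\alpha=f(x)$, $\beta=f(y)$. Then
$$x+y=(\alpha+\beta)\Big[\tfrac{\alpha}{\alpha+\beta}b+\tfrac{\beta}{\alpha+\beta}b'\Big],$$
and the bracket lies in $B_f(C)$ by the convexity of the base shown above. Since $f(x+y)=\alpha+\beta$, the definition gives $\tilde h(x+y)=(\alpha+\beta)\,h(\lambda b+(1-\lambda)b')$ with $\lambda=\alpha/(\alpha+\beta)$. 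Affinity of $h$ turns this into $\alpha h(b)+\beta h(b') = \tilde h(x)+\tilde h(y)$. So additivity is a direct translation of convex-linearity through the unique decomposition. Together with (i), $\tilde h$ is additive and positively homogeneous on $C$. If one wants a linear map on $\operatorname{span} C = C - C$, set $\tilde h(x-y)=\tilde h(x)-\tilde h(y)$. This is well defined because $x-y=x'-y'$ implies $x+y'=x'+y$, and additivity then gives the same value.

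The main obstacle I expect is boundedness, since the definition involves the nonlinear normalization $x\mapsto x/f(x)$. The estimate I would aim for is
$$\|\tilde h(x)\| = f(x)\,\|h(x/f(x))\| \le f(x)\, M \le \|f\|\,M\,\|x\|,$$
where $M=\sup_{b\in B_f(C)}\|h(b)\|$ is finite because $h$ is bounded. This uses only the continuity of $f$, so boundedness on the cone is immediate. On the span the bound is subtler, because $\|x\|+\|y\|$ need not be controlled by $\|x-y\|$. I would either restrict the claim to the cone, which is all the statement asserts, or invoke a normality or generating assumption on $C$, such as the decomposition $x = x_+ - x_-$ with $\|x_\pm\|\le\|x\|$ that holds for Hermitian trace-class operators. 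That would cover the application to density matrices with $f=\Tr$.
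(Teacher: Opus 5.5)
Your proposal is correct and follows essentially the same route as the paper: the same formula $\tilde h(x)=f(x)\,h(x/f(x))$, uniqueness from \cref{lem:UniqueDecomp} together with (i) and (ii), direct checks of (i)--(iii), and additivity obtained by applying the affineness of $h$ to the convex combination $\tfrac{\alpha}{\alpha+\beta}b+\tfrac{\beta}{\alpha+\beta}b'$. The differences are minor: you bound $\|\tilde h(x)\|$ using $M=\sup_{b\in B_f(C)}\|h(b)\|$ and $f(x)\le\|f\|\,\|x\|$, whereas the paper reads boundedness of $h$ as $\|h(b)\|\le C_h\|b\|$; and your extension to $\mathrm{span}\,C=C-C$ goes beyond what the paper proves.
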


\begin{proof}
Define $\tilde h: C \to C$ by 
\begin{equation*}
	\tilde h(x) = f(x)\, h\!\left(\frac{x}{f(x)}\right).
\end{equation*}
This is well-defined: $f(x) > 0$ implies $x/f(x) \in B_f(C)$, so $h(x/f(x)) \in B_f(C) \subset C$.

\noindent The boundedness of $\tilde h$ follows from $\| \tilde h (x) \| = |f(x)| \| h(x/f(x))\| \leq C_h \| x\|$, where $C_h$ is the norm of $h$.

\noindent The conditions (i), (ii), and (iii) are easy to verify:

(i) For $\alpha > 0$: $\tilde h(\alpha x) = f(\alpha x)\, h\!\left(\frac{\alpha x}{f(\alpha x)}\right) = \alpha f(x)\, h\!\left(\frac{x}{f(x)}\right) = \alpha\, \tilde h(x)$,
using the linearity of $f$.

(ii) For $b \in B_f(C)$: $\tilde h(b) = 1 \cdot h(b) = h(b)$.

(iii) $f(\tilde h(x)) = f\!\left(f(x)\, h\!\left(\frac{x}{f(x)}\right)\right) = f(x)\, f\!\left(h\!\left(\frac{x}{f(x)}\right)\right) = f(x)$,
since $h(x/f(x)) \in B_f(C)$ implies $f(h(x/f(x))) = 1$.

\noindent \emph{Linearity.} We first show $\tilde h(x+y) = \tilde h(x) + \tilde h(y)$ for $x,y \in C$.
Set $\alpha = f(x)$, $\beta = f(y)$, so that $x/\alpha,\, y/\beta \in B_f(C)$.
Since $C$ is a convex cone, $x+y \in C$, and by linearity of $f$, $f(x+y) = \alpha + \beta$; thus 
by the definition of $\tilde h$:
\begin{equation*}
\tilde h(x+y) = (\alpha + \beta)\, h\!\left(\frac{x+y}{\alpha + \beta}\right).
\end{equation*}
The last factor can be simplified by writing
\begin{equation*}
	\frac{x+y}{\alpha + \beta} = \underbrace{\frac{\alpha}{\alpha+\beta}}_{\lambda}\frac{x}{\alpha} 
	+ \underbrace{\frac{\beta}{\alpha+\beta}}_{1-\lambda}\frac{y}{\beta}
\end{equation*}
and using the affineness of $h$:
\begin{equation*}
\begin{split}
	\tilde h(x+y) 
	= (\alpha+\beta)\left[\frac{\alpha}{\alpha+\beta}\, h\!\left(\frac{x}{\alpha}\right) 
	+ \frac{\beta}{\alpha+\beta}\, h\!\left(\frac{y}{\beta}\right)\right]
	= \alpha\, h\!\left(\frac{x}{\alpha}\right) + \beta\, h\!\left(\frac{y}{\beta}\right) 
	= \tilde h(x) + \tilde h(y).
\end{split}
\end{equation*}
Combined with (i), this gives $\tilde h(\alpha x + \beta y) = \alpha\,\tilde h(x) + \beta\,\tilde h(y)$ for all $\alpha,\beta > 0$, so $\tilde h$ is linear on $C$.

\noindent \emph{Uniqueness} follows from \cref{lem:UniqueDecomp} and (i), (ii):
let $h': C \rightarrow C$ be any linear map satisfying (i), (ii), then
\begin{equation*}
	h'(x) = h'\!\left(f(x)\cdot\frac{x}{f(x)}\right) 
	\overset{(\mathrm{i})}{=} f(x)\, h'\!\left(\frac{x}{f(x)}\right) 
	\overset{(\mathrm{ii})}{=} f(x)\, h\!\left(\frac{x}{f(x)}\right) = \tilde h(x).
\end{equation*}
\end{proof}

The set of normalized quantum states forms the base of the convex cone of
sub-normalized states, with the trace $\t{Tr}$ identified as the linear functional $f$ that defines the base.
With this identification, \cref{thm:ConeExtension} immediately applies to the single-variable map $\K_1$: 
if $\K_1$ is convex-linear on the set of normalized states, it has a unique
extension, which we also denote $\K_1$, to a linear map on the sub-normalized states such that $\K_1[\lambda \rho] = \lambda \K_1[\rho]$.

For the multi-variable map $\K_{n+1}$, convex-linearity is postulated as
[\cref{eq:ConvexLinearity} in the main text]
\begin{equation*}
	\K_{n+1}[\lambda \rho_n + (1-\lambda) \rho_n', \ldots, 
	\lambda \rho_0 + (1-\lambda) \rho_0'] = 
	\lambda \K_{n+1}[\rho_n, \ldots, \rho_0] 
	+ (1-\lambda) \K_{n+1}[\rho_n', \ldots, \rho_0'].
\end{equation*}
Since it is the same convex combination that applies to each argument, we
consider the map $\K_{n+1}$ as acting on the tuple $x = (\rho_n, \ldots, \rho_0)$ as a single variable. The set $C = \{px = (p\rho_n, \ldots, p\rho_0) \mid p \geq 0, \Tr \rho_i = 1\}$ clearly forms a convex cone. 
The set of normalized states corresponds to the base of this cone defined
by the linear functional $f(x) = \Tr \rho_n$ (or any $\Tr \rho_i$ since
$x\in C$ implies that $\Tr \rho_i = \Tr \rho_j$ for all $i,j$).
With this identification, \cref{thm:ConeExtension} applies to $\K_{n+1}$, giving a unique linear extension of $\K_{n+1}$ to the convex cone $C$ such that 
$\K_{n+1}[\lambda \rho_n, \ldots, \lambda\rho_0] = \lambda\K_{n+1}[\rho_n, \ldots, \rho_0]$.

\newpage
\foreach \x in {1,...,7}
{
\clearpage
\includepdf[pages={\x},angle=0]{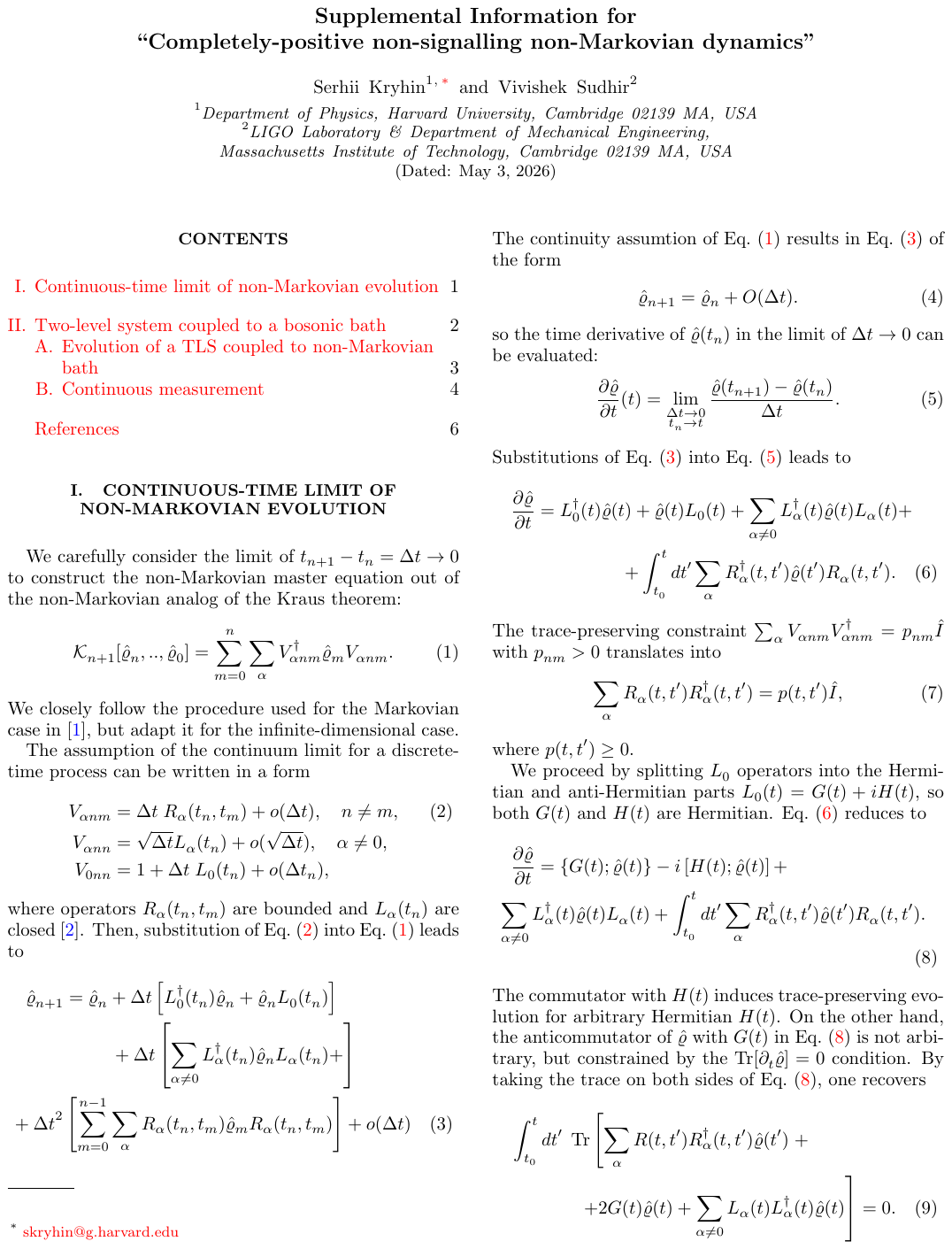} 
}

\end{document}